\documentclass{llncs}
\pagestyle{headings}
\usepackage{xspace,amsmath,amssymb,stmaryrd,enumerate,color,ulem}
\usepackage{hyperref}
\usepackage{algorithm}
\usepackage{algpseudocode}
\usepackage[all]{xy}

\newcommand{\fv}{\mathit{fv}}
\newcommand{\Lnb}{\mathcal{L}_{\neq\bot}}
\newcommand{\U}{\mathcal{U}}
\renewcommand{\L}{\mathcal{L}}
\newcommand{\F}{\mathcal{F}}
\newcommand{\R}{\mathcal{R}}

\newcommand{\rank}{\mbox{\rm rank}}

\newcommand{\Var}{\mathit{Var}}
\newcommand{\sem}[1]{\llbracket #1 \rrbracket}

\newcommand{\None}{\mathit{None}}
\newcommand{\Some}{\mathit{Some}}
\newcommand{\Interval}{\mathit{Interval}}
\newcommand{\op}{\star}
\newcommand{\glb}{\bigsqcap{\!}_{\Delta}}
\newcommand{\leqR}{\preceq}

\newcommand{\Vf}{V'}

\newcommand{\ifit}{\textbf{if }}

\newcommand{\then}{\textbf{ then }}
\newcommand{\elseit}{\textbf{ else }}

\begin{document}

\title{Lattice based Least Fixed Point Logic}

\author{Piotr Filipiuk 
   \and Flemming Nielson 
   \and Hanne Riis Nielson 
   }
\institute{DTU Informatics, Richard Petersens Plads,Technical University of Denmark, DK-2800 Kongens Lyngby, Denmark \\
           \email{\{pifi,nielson,riis\}@imm.dtu.dk}}



\maketitle


\begin{abstract}
  As software systems become more complex, there is an increasing need
  for new static analyses. Thanks to the declarative style, logic
  programming is an attractive formalism for specifying them. However,
  prior work on using logic programming for static analysis focused on
  analyses defined over some powerset domain, which is quite
  limiting. In this paper we present a logic that lifts this
  restriction, called Lattice based Least Fixed Point Logic (LLFP),
  that allows interpretations over any complete lattice satisfying
  Ascending Chain Condition. The main theoretical contribution is a
  Moore Family result that guarantees that there always is a unique
  least solution for a given problem. Another contribution is the
  development of solving algorithm that computes the least model of
  LLFP formulae guaranteed by the Moore Family result.
\end{abstract}

\keywords{Static analysis, logic programming, abstract
  interpretation.}

\section{Introduction}
\label{sec:Introduction}
Nowadays, we heavily rely on software systems. At the same time they
become bigger and more complex, and hence the number of potential
errors increases. In order to achieve more reliable systems, formal
verification techniques may be applied. A widely used verification
technique is static analysis, which reasons about system behavior
without executing it. It is performed statically at compile-time, and
it computes safe approximations of values or behaviors that may occur
at run-time. Static analysis is increasingly recognized as a
fundamental technique for program verification, bug detection,
compiler optimization and software understanding.

Unfortunately, developing new static analyses is difficult and
error-prone. In order to overcome that problem it is desirable to
implement prototypes of analyses that are easy to analyse for
complexity and correctness. Since analysis specifications are
generally written in a declarative style, logic programming presents
an attractive model for producing executable specifications of
analyses. Furthermore, thanks to the advances in logic programming,
the associated solvers became more efficient.

In this paper we present a framework that facilitates rapid
prototyping of new static analyses. The approach taken falls within
the Abstract Interpretation \cite{bib:cousot79,bib:cousot77}
framework, thus there always is a unique best solution to the analysis
problem considered. The framework consists of the Lattice based Least
Fixed Point Logic (LLFP) and the associated solver. The most prominent
feature of the LLFP logic is its interpretation over complete lattices
satisfying the Ascending Chain Condition, which makes it possible to
express sophisticated analyses in LLFP. The solver combines a
continuation passing style algorithm with propagation of differences,
and uses prefix trees as its main data structure. The applicability of
the framework is illustrated by presenting a specification of interval
analysis, which could not be specified using logics traditionally used
such as Datalog \cite{bib:datalog1,bib:datalog2} or ALFP
\cite{bib:ssforalfp}.

This paper is organized as follows. In Section \ref{sec:motivation} we
present the problem we want to solve and indicate a solution. Section
\ref{sec:alfp-lat} introduces syntax and semantics of the LLFP
logic. In Section \ref{sec:moore-family} we establish our main
theoretical contribution; namely a Moore Family result for
LLFP. Section \ref{sec:algo} describes the solving algorithm for
LLFP. We conclude and discuss future work in Section
\ref{sec:conclusions}.

\section{The problem}
\label{sec:motivation}
There is an immense body of work on using logic for specifying static
analyses. However, logics traditionally used have some limitations. To
illustrate the problem, let us briefly introduce the Alternation free
Least Fixed Point Logic (ALFP) and then try to devise the ALFP
specifications of two analyses: detection of signs and interval
analysis.

\paragraph{Alternation-free Least Fixed Point Logic.} 
Many static analyses can be succinctly expressed using
Alternation-free Least Fixed Point Logic (ALFP)
\cite{bib:ssforalfp}. The logic is a generalization of Datalog
\cite{bib:datalog1,bib:datalog2} and it has proved to have a number of
properties essential for specifying static analyses such as the
existence of a unique least model. The syntax of ALFP is given by
$$
\begin{array}{lcl}
  v & ::= & x \mid a \\
  pre & ::= &  R(v_1, \dots, v_k) \mid \neg R(v_1, \dots, v_k) \mid pre_1 \wedge pre_2  \\
  &  \mid  &  pre_1 \vee pre_2 \mid \exists x: pre
  \mid v_1 = v_2 \mid v_1 \neq v_2\\
  cl & ::= &  R(v_1, \dots, v_k) \mid {\bf 1} \mid cl_1 \wedge cl_2 \mid pre \Rightarrow cl \mid \forall x: cl
\end{array}
$$
where we write $a$ for constants, $x$ for analysis variables, $v$ for
values, $R$ for predicates, $pre$ for preconditions, and $cl$ for
clauses.  The clauses are interpreted over a universe $\U$ of
constants, $a \in \U$. The interpretation is given in terms of
satisfaction relations $(\rho, \sigma) \models pre$ and $(\rho,
\sigma) \models cl$ where $\rho$ is an interpretation of predicates,
and $\sigma$ is an interpretation of variables. The definition is
standard and hence omitted. 
Due to the use of negation, we impose a \textit{stratification}
condition similar to the one in Datalog
\cite{bib:datalog1,bib:datalog2}. This intuitively means that no
predicate depends on the negation of itself. We refer to
\cite{bib:ssforalfp} for more details.
\begin{example}
Using the notion of stratification we can define equality $E$ and
non-equality $N$ predicates as follows
\[
(\forall x : E(x,x)) \wedge (\forall x : \forall y : \neg E(x,y)
\Rightarrow N(x,y))
\]
The formula is stratified, since predicate $E$ is fully asserted
before it is negatively queried in the clause asserting predicate $N$.
\label{example:eq-neq-alfp}
\end{example}

\paragraph{Detection of signs analysis.}

Now, let us consider the ALFP formulation of the detection of signs
analysis. The analysis aims to determine for each program point and
each variable, the possible sign (negative, zero or positive) that the
variable may have whenever the execution reaches that point. In the
following, we use program graphs as representation of the program
under consideration \cite{bib:pmc}. Compared to the classical flow
graphs \cite{bib:kildall,bib:ppa}, the main difference is that in the
program graphs the actions label the edges rather than the
states. Here we focus on three types of actions: assignments, boolean
expressions and the skip action. For simplicity we assume that
assignments are in three-address form. The analysis is defined by
predicate $A$, and we begin with initializing the initial state of the
program graph, $q_0$, with all possible signs for all variables $v$
occurring in the underlying program graph
\[
\bigwedge_{v \in \Var} A(q_0,v,-) \wedge A(q_0,v,0) \wedge A(q_0,v,+)
\]
Intuitively it indicates that at state $q_0$ all variables may have
all possible values. Now we consider the ALFP specifications for each
type of action. Whenever we have $q_s \xrightarrow{x:=y \op z} q_t$ in
the program graph we generate
\begin{eqnarray*}
&\forall s : \forall s_y : \forall s_z : A(q_s, y, s_y) \wedge A(q_s, z, s_z)
\wedge R_{\op}(s_y, s_z, s) \Rightarrow A(q_t, x, s) \wedge\\
&\forall v : \forall s : v \neq x \wedge A(q_s, v, s) \Rightarrow A(q_t, v, s)
\end{eqnarray*}
where we assume that we have a relation for each type of arithmetic
operation, denoted by $R_{\op}$ in the above formula. The first
conjunct states that for all possible values $s$, $s_y$ and $s_z$, if
at state $q_s$ the signs of variables $y$ and $z$ are $s_y$ and $s_z$,
respectively, and the sign of the result of evaluating the arithmetic
operation $\op$ is $s$, then at state $q_t$ variable $x$ will have
sign $s$. The second conjunct expresses that for all variables $v$ and
signs $s$, if the variable is different than $x$ and at state $q_s$ it
has sign $s$, then it will have the same sign at state
$q_t$. Similarly, whenever we have $q_s \xrightarrow{e} q_t$ or $q_s
\xrightarrow{skip} q_t$ in the program graph, we generate a clause
\[
\forall v : \forall s : A(q_s, v, s) \Rightarrow A(q_t, v, s)
\]
The clause simply propagates the signs of all variables along the edge
of the program graph, without altering it. 

In a similar manner we could formulate other analyses such as pointer
analysis
\cite{bib:pabdd,DBLP:conf/pods/LamWLMACU05,DBLP:conf/aplas/WhaleyACL05,DBLP:conf/oopsla/BravenboerS09},
or classical data flow analyses \cite{bib:mf,bib:reps93}. In more
general terms, logics traditionally used, e.g.~Datalog and ALFP, can
be used for specifying analyses defined over a powerset
domain. However, as we show in the next paragraph, many interesting
analyses are defined over some mathematical structure such as a
complete lattice. Thus, let us now consider interval analysis as an
example of such an analysis.


\label{sec:intervals}
\paragraph{Interval analysis.}

The purpose of interval analysis is to determine for each program
point an interval containing possible values of variables whenever
that point is reached during run-time execution. The analysis results
can be used for Array Bound Analysis, which determines whether an
array index is always within the bounds of the array. If this is the
case, a run-time check can safely be eliminated, which makes code more
efficient.

We begin with defining the complete lattice $(\Interval,
\sqsubseteq_I)$ over which the analysis is defined. The underlying set
is
\[
\Interval={\bot} \cup \{ [z_1,z_2] \mid z_1 \leq z_2, z_1 \in Z \cup \{ -
\infty \}, z_2 \in Z \cup \{ \infty \} \}
\]
where $Z$ is a finite subset of integers, $Z \subseteq \mathbb{Z}$,
and the integer ordering $\leq$ on $\mathbb{Z}$ is extended to an
ordering on $Z' = Z\cup \{ -\infty,\infty \}$ by taking for all $z \in
Z$: $-\infty \leq z$, $z \leq \infty$ and $-\infty \leq \infty$. In
the above definition, $\bot$ denotes an empty interval, whereas $[z_1,
z_2]$ is the interval from $z_1$ to $z_2$ including the end points,
where $z_1, z_2 \in Z$. The interval $[ - \infty, \infty ]$ is
equivalent to the top element, $\top$. In the following we use $i$
to denote an interval from $\Interval$.
The partial ordering $\sqsubseteq_I$ in $\Interval$ uses operations
$\inf$ and $\sup$
\[
\begin{array}{rcl c rcl}
  \inf(i) & = & \left\{\begin{array}{ll}
      \infty & \text{if } i=\bot\\ 
      z_1 & \text{if } i=[z_1,z_2]
\end{array}\right. & \hspace{5mm} &
\sup(i) & = & \left\{\begin{array}{ll}
-\infty & \text{if } i=\bot\\ 
z_2 & \text{if } i=[z_1,z_2]
\end{array}\right. \\
\end{array}
\]
and is defined as
\[
i_1 \sqsubseteq_I i_2 \text{ iff } \inf(i_2) \leq \inf(i_1)
\wedge \sup(i_1) \leq \sup(i_2)
\]
The intuition behind the partial ordering $\sqsubseteq_I$ in $\Interval$
is that
\[
i_1 \sqsubseteq_I i_2 \Leftrightarrow \{ z \mid z \text{ belongs to
} i_1 \} \subseteq \{ z \mid z \text{ belongs to } i_2 \}
\]

Unfortunately, due to limited expressiveness of the ALFP logic (and
similarly Datalog), interval analysis can not be specified using these
formalisms. Hence, in the following section we present a solution for
that problem; namely we introduce LLFP logic.

\section{Syntax and Semantics}
\label{sec:alfp-lat}
In the previous section we briefly introduced ALFP logic, which is
interpreted over a finite universe of atoms; in this section we
present an extension of ALFP called LLFP allowing interpretations over
complete lattices satisfying Ascending Chain Condition. We also allow
function terms as arguments of relations. Since functions over the
universe $\U$ can be represented as relations, we do not consider them
here. Instead, we focus on functions over a complete lattice $\sem{f}:
\L^k \rightarrow \L$, and we restrict our attention to monotone
functions only. Recall that a function $\sem{f}: \L_1 \rightarrow
\L_2$ between partially ordered sets $\L_1 = (\L_1, \sqsubseteq_1)$
and $\L_2 = (\L_2, \sqsubseteq_2)$ is monotone if
\[
\forall l, l' \in \L_1 : l \sqsubseteq_1 l' \Rightarrow \sem{f}(l)
\sqsubseteq_2 \sem{f}(l')
\] Let us begin with introducing necessary definitions.
\begin{definition}
\label{def:complete-lattice}
A complete lattice $\L = (\L,\sqsubseteq) =
(\mathcal{L},\sqsubseteq,\bigsqcup,\bigsqcap,\bot,\top)$ is a
partially ordered set $(\mathcal{L},\sqsubseteq)$ such that all
subsets have least upper bounds as well as greatest lower bounds.
\end{definition}
A subset $Y \subseteq \L$ of a partially ordered set $\L =
(\L,\sqsubseteq)$ is a \emph{chain} if
\[
\forall l_1, l_2 \in Y : (l_1 \sqsubseteq l_2) \vee ( l_2 \sqsubseteq l_1)
\]
Hence, a chain is a (possibly empty) subset of $\L$ that is totally
ordered. A sequence $(l_n)_n$ of elements in $\L$ is an
\emph{ascending chain} if
\[
n < m \Rightarrow l_n \sqsubseteq l_m
\]
We say that a sequence $(l_n)_n$ \emph{eventually stabilises} if and
only if
\[
\exists n_0 \in \bbbn : \forall n \in \bbbn : n > n_0 \Rightarrow l_n
= l_{n_0}
\]
The partially ordered set $\L$ satisfies \emph{Ascending Chain
  Condition} if and only if all ascending chains eventually
stabilise. Essentially, the Ascending Chain Condition guarantees that
the least fixed point computation always terminates.
Due to the use of negation in the logic, we need to introduce a
complement operator, $\complement$, in the underlying complete
lattice. The only condition that we impose on the complement is
anti-monotonicity i.e.~$\forall l_1,l_2 \in \L: l_1 \sqsubseteq l_2
\Rightarrow \complement l_1 \sqsupseteq \complement l_2$, which is
necessary for establishing Moore Family result.
The following definition introduces the syntax of LLFP.
\begin{definition}\label{def:syntax-llfp-monotone}
  Given fixed countable and pairwise disjoint sets $\mathcal{X}$ and
  $\mathcal{Y}$ of variables, a non-empty and finite universe
  $\mathcal{U}$, a complete lattice satisfying Ascending Chain
  Condition $\L$, finite alphabets $\mathcal{R}$ and $\mathcal{F}$ of
  predicate and function symbols, respectively, we define the set of
  LLFP formulae (or clause sequences), $cls$, together with clauses,
  $cl$, preconditions, $pre$, terms $u$ and lattice terms $V$ and $\Vf$ by the
  grammar:
  \begin{center}
    \begin{tabular}{ l c l }
      $u$ & ::= & $x \mid a$\\
      $V$ & ::= & $Y \mid [u]$ \\
      $\Vf$ & ::= & $V \mid f(\vec{\Vf})$ \\
      $pre$ & ::= & $ R(\vec u; V) \mid \neg R(\vec u; V) \mid Y(u) \mid pre_1 \wedge pre_2 \mid  pre_1 \vee pre_2$\\
      & $ \mid$ & $ \exists x: pre \mid \exists Y: pre$ \\
      $cl$ & ::= & $ R(\vec u; \Vf) \mid {\bf 1} \mid cl_1 \wedge cl_2 \mid pre \Rightarrow cl \mid \forall x: cl \mid \forall Y: cl$ \\
      $cls$ & ::= & $ cl_1,\ldots,cl_s$
    \end{tabular}
  \end{center}
  Here $x \in \mathcal{X}$, $a\in\mathcal{U}$, $Y\in\mathcal{Y}$, $R
  \in \mathcal{R}$, $f \in \mathcal{F}$, and $s\geq 1$. Furthermore,
  $\vec u$ and $\vec{\Vf}$ abbreviate tuples $(u_1, \ldots, u_k)$ and
  $(\Vf_1, \ldots, \Vf_k)$ for some $k\geq 0$, respectively.
\end{definition}
We write $\fv(\cdot)$ for the set of free variables in the argument
$\cdot$. Occurrences of $R(\vec u;V)$ and $\neg R(\vec u;V)$ in
preconditions are called {\em positive}, resp.~{\em negative}, queries
and we require that $\fv(\vec u)\subseteq{\cal X}$ and
$\fv(V)\subseteq{\cal Y} \cup {\cal X}$; these variables are
\textit{defining} occurrences. Occurrences of $Y(u)$ in preconditions
must satisfy $Y\in{\cal Y}$ and $\fv(u)\subseteq {\cal X}$; $Y$ is an
\textit{applied} occurrence, $u$ is a defining occurrence. Clauses of
the form $R(\vec u;\Vf)$ are called \textit{assertions}; we require that
$\fv(\vec u)\subseteq{\cal X}$ and $\fv(\Vf)\subseteq{\cal Y} \cup {\cal
  X}$ and we note that these variables are applied occurrences. A
clause $cl$ satisfying these conditions together with
$\fv(cl)=\emptyset$ is said to be \textit{well-formed}; we are only
interested in clause sequences $cls$ consisting of well-formed
clauses.

In order to ensure desirable theoretical and pragmatic properties in
the presence of negation, we impose a notion of
\textit{stratification} similar to the one in Datalog
\cite{bib:datalog1,bib:datalog2}. Intuitively, stratification ensures
that a negative query is not performed until the predicate has been
fully asserted. This is important for ensuring that once a
precondition evaluates to true it will continue to be true even after
further assertions of predicates.
\begin{definition}
  The formula $cls = cl_1,\cdots, cl_s$ is stratified if there exists
  a function $\rank : \mathcal{R} \rightarrow \{0,\cdots,s\}$ such that
  for all $i = 1,\cdots, s$:
  \begin{itemize}
  \item $\rank(R) = i$ for every assertion $R$ in $cl_i$;
  \item $\rank(R) \le i$ for every positive query $R$ in $cl_i$; and
  \item $\rank(R) < i$ for every negative query $\neg R$ in $cl_i$.
  \end{itemize}
\label{def:stratification}
\end{definition}
The following example illustrates the use of negation in the LLFP
formula.
\begin{example}
  Similarly to Example \ref{example:eq-neq-alfp}, we can define
  equality $E$ and non-equality $N$ predicates in LLFP as follows
\[
(\forall x : E(x;[x])), (\forall x : \forall Y : \neg E(x;Y)
\Rightarrow N(x;Y))
\]
According to Definition \ref{def:stratification} the formula is
stratified, since predicate $E$ is fully asserted before it is
negatively queried in the clause asserting predicate $N$. As a result
we can dispense with an explicit treatment of $=$ and $\neq$ in the
development that follows. On the other hand the Definition
\ref{def:stratification} rules out
\[
(\forall x : \forall Y : \neg P(x;Y) \Rightarrow Q(x;Y)),
(\forall x : \forall Y : \neg Q(x;Y) \Rightarrow P(x;Y))
\]
\end{example}

To specify the semantics of LLFP we introduce the interpretations
$\varrho$, $\varsigma$ and $\zeta$ of predicate symbols,
variables and function symbols, respectively. Formally we have
$$
\begin{array}{rl}
  \varrho: & \prod_{k} \mathcal{R}_{/k} \rightarrow \mathcal{U}^k \rightarrow \L\\
  \varsigma: & (\mathcal{X} \rightarrow \mathcal{U}) \times
  (\mathcal{Y} \rightarrow \Lnb) \\
  \zeta: &
  \prod_{k} \F_{/k} \rightarrow \L^k \rightarrow \L
\end{array}
$$
In the above $\mathcal{R}_{/k}$ stands for a set of predicate symbols
of arity $k$, and $\mathcal{R}$ is a disjoint union of
$\mathcal{R}_{/k}$, hence $\mathcal{R}=\biguplus_{k}
\mathcal{R}_{/k}$.
Similarly, $\F_{/k}$ is a set of function symbols of
arity $k$ over the complete lattice $\L$.  The
set $\F$ is then defined as disjoint unions of
$\F_{/k}$; hence $\F=\biguplus_{k} \F_{/k}$.
The interpretation of variables from $\mathcal{X}$ is given by
$\sem{x}(\zeta, \varsigma)=\varsigma(x)$, where $\varsigma(x)$ is the
element from $\mathcal{U}$ bound to $x\in{\cal X}$. Analogously, the
interpretation of variables from $\mathcal{Y}$ is given by
$\sem{Y}(\zeta, \varsigma)=\varsigma(Y)$, where $\varsigma(Y)$ is the
element from $\Lnb = \L \setminus \{ \bot \}$ bound to
$Y\in\mathcal{Y}$. We do not allow variables from $\mathcal{Y}$ to be
mapped to $\bot$ in order to establish a relationship between ALFP and
LLFP in the case of powerset lattice, i.e.~$\mathcal{P}(\U)$, which we
briefly describe later.
In order to give the interpretation of $[u]$, we introduce a function
$\beta: \mathcal{U}\rightarrow \L$. The $\beta$ function is called a
\textit{representation function} and the idea is that $\beta$ maps a
value from the universe $\U$ to the \textit{best} property describing
it. For example in the case of a powerset lattice, $\beta$ could be
defined by $\beta(a) = \{ a \}$ for all $a \in \U$. Then the
interpretation is given by $\sem{[u]}(\zeta,
\varsigma)=\beta(\sem{u}(\zeta, \varsigma))$.
The interpretation of function terms is defined as $\sem{f(\vec
  \Vf)}(\zeta,\varsigma)=\zeta(f)(\sem{\vec
  \Vf}(\zeta,\varsigma))$. For the functions we require that
$\zeta(f): \L^k \rightarrow \L$ is monotone. The interpretation of
terms is generalized to sequences $\vec u$ of terms in a point-wise
manner by taking $\sem{a}(\zeta, \varsigma)=a$ for all $a\in {\cal
  U}$, thus $\sem{(u_1,\ldots, u_k)}(\zeta,
\varsigma)=(\sem{u_1}(\zeta, \varsigma), \ldots, \sem{u_k}(\zeta,
\varsigma))$. The interpretation of lattice terms $V$ (and $\Vf$) is
generalized to sequences $\vec{V}$ (and $\vec{\Vf}$) of lattice terms in
the similar way.

The satisfaction relations for preconditions $pre$, clauses $cl$ and
clause sequences $cls$ are specified by:
\[
(\varrho, \varsigma) \models_{\beta} pre,\quad
(\varrho, \zeta, \varsigma) \models_{\beta} cl\quad
\mathrm{and}\ (\varrho, \zeta, \varsigma) \models_{\beta}
cls
\]
The formal definition is given in Table \ref{ALFPStarsemantics}; here
$\varsigma[x\mapsto a]$ stands for the mapping that is as $\varsigma$
except that $x$ is mapped to $a$ and similarly $\varsigma[Y\mapsto l]$
stands for the mapping that is as $\varsigma$ except that $Y$ is
mapped to $l \in \Lnb$. 
\begin{table}
\caption{Semantics of LLFP}
\label{ALFPStarsemantics}
$$
\begin{array}{lllll}
  (\varrho, \varsigma) & \models_{\beta} & R(\vec u;V) &
  \underline{\texttt{iff}} & \varrho(R)(\varsigma(\vec u)) \sqsupseteq \varsigma(V) \\
  (\varrho, \varsigma) & \models_{\beta} & \neg R(\vec u;V) &
  \underline{\texttt{iff}} & \complement(\varrho(R)(\varsigma(\vec
  u))) \sqsupseteq \varsigma(V) \\
  (\varrho, \varsigma) & \models_{\beta} & Y(u) & \underline{\texttt{iff}} & \beta(\varsigma(u)) \sqsubseteq \varsigma(Y)\\
  (\varrho, \varsigma) & \models_{\beta} & pre_1 \wedge pre_2 & \underline{\texttt{iff}} & (\varrho, \varsigma) \models_{\beta} pre_1 \text{ and } (\varrho, \varsigma) \models_{\beta} pre_2 \\
  (\varrho, \varsigma) & \models_{\beta} & pre_1 \vee pre_2 & \underline{\texttt{iff}} & (\varrho, \varsigma) \models_{\beta} pre_1 \text{ or } (\varrho, \varsigma) \models_{\beta} pre_2
  \\
  (\varrho, \varsigma) & \models_{\beta} & \exists x: pre & \underline{\texttt{iff}} & (\varrho, \varsigma[x\mapsto a]) \models_{\beta} pre \text{ for some } a\in{\cal U}
  \\
  (\varrho, \varsigma) & \models_{\beta} & \exists Y: pre & \underline{\texttt{iff}} & (\varrho, \varsigma[Y\mapsto l]) \models_{\beta} pre \text{ for some } l\in\Lnb
  \\ \\
  (\varrho, \zeta, \varsigma) & \models_{\beta} & R(\vec
  u;V') & \underline{\texttt{iff}} & \varrho(R)(\sem{\vec u}(\zeta,
  \varsigma)) \sqsupseteq \sem{V'}(\zeta,
  \varsigma) \\
  (\varrho, \zeta, \varsigma) & \models_{\beta} & {\bf 1} & \underline{\texttt{iff}} & \texttt{true} \\
  (\varrho, \zeta, \varsigma) & \models_{\beta} & cl_1 \wedge cl_2 & \underline{\texttt{iff}} & (\varrho, \zeta, \varsigma) \models_{\beta} cl_1 \text{ and } (\varrho, \zeta, \varsigma) \models_{\beta} cl_2 \\
  (\varrho, \zeta, \varsigma) & \models_{\beta} & pre \Rightarrow cl & \underline{\texttt{iff}} & (\varrho, \zeta, \varsigma) \models_{\beta} cl \text{ whenever } (\varrho, \varsigma) \models_{\beta} pre \\
  (\varrho, \zeta, \varsigma) & \models_{\beta} & \forall
  x:cl & \underline{\texttt{iff}} & (\varrho, \zeta, \varsigma[x \mapsto a]) \models_{\beta} cl \text{ for all }a \in \mathcal{U} \\
  (\varrho, \zeta, \varsigma) & \models_{\beta} & \forall
  Y:cl & \underline{\texttt{iff}} & (\varrho, \zeta, \varsigma[Y \mapsto l]) \models_{\beta} cl \text{ for all }l \in \Lnb
  \\ \\
  (\varrho, \zeta, \varsigma) & \models_{\beta} &
  cl_1,\cdots, cl_s &  \underline{\texttt{iff}} & (\varrho,
  \zeta, \varsigma) \models_{\beta} cl_i \text{ for all } i, 1\leq i\leq s
\end{array}
$$
\end{table}

\paragraph{Relationship to ALFP.}

As reader may have already noticed, in the case the underlying
complete lattice is $\mathcal{P}(\U)$ the two logics are essentially
equivalent. More precisely, in the case of powerset lattice,
$\mathcal{P}(\U)$, function $\beta$ given by $\beta(a) = \{ a \}$ for
all $a \in \U$, and without function terms we can translate LLFP
formula into a corresponding ALFP one and vice versa. Intuitively, we
get the following correspondence between interpretations of relations
\[
\forall \vec{a}, b : ( (\vec{a},b) \in \rho(R) 
\Leftrightarrow
\varrho(R)(\vec{a}) \supseteq \{ b \} )
\]
The idea is that a relation $R$ in LLFP with interpretation
$\varrho(R)\in{\U}^k\rightarrow \mathcal{P}(\U)$ is replaced by a
relation in ALFP (also named $R$) with interpretation $\rho(R) \in
{\cal P}({\cal U}^{k+1})$. Note that if $\varrho(R)(\vec a) = \bot$
then $\rho(R)$ does not contain any tuples with $\vec a$ as the first
$k$ components.

\paragraph{Interval analysis in LLFP.}

Now let us give an LLFP specification of interval analysis. The
analysis is defined by the predicate $A$. Similarly to Datalog or
ALFP, the specification is defined over a universe $\U$, which in this
case is a set of all variables, $\Var$, appearing in the program as
well as states in the underlying program graph. In addition, the LLFP
logic allows interpretations over complete lattices satisfying
Ascending Chain Condition. Here we use the lattice $(\Interval,
\sqsubseteq_I)$, defined in Section \ref{sec:motivation}.

The specification consists of the initialization clauses and clauses
corresponding to three types of actions in the underlying program
graph. First, for the initial state, $q_0$, we initialize all
variables in the program graph with the $\top$ element, denoting that
they may have all possible values
\[
\bigwedge_{v \in \Var} A(q_0,v;\top)
\]
Furthermore, whenever we have $q_s \xrightarrow{x:=y \op z} q_t$ in
the program graph we generate
\begin{eqnarray*}
& \forall i_y: \forall i_z: A(q_s,y;i_y) \wedge A(q_s,z;i_z)
\Rightarrow A(q_t,x;f_{\op}(i_y,i_z)) \wedge\\
& \forall v: \forall i: v \neq x \wedge A(q_s,v;i) \Rightarrow A(q_t,v;i)
\end{eqnarray*}
The first conjunct updates the possible interval of values for the
assigned variable (in that case for variable $x$), with the result of
evaluating the arithmetic operation $y \op z$. The second conjunct
propagates the analysis information for all variables except variable
$x$ without altering it. Furthermore, whenever we have $q_s
\xrightarrow{e} q_t$ or $q_s \xrightarrow{skip} q_t$ in the program
graph, we generate a clause
\[
\forall v : \forall i : A(q_s, v, i) \Rightarrow A(q_t, v, i)
\]
which simply propagates the analysis information along the edge of the
program graph, without making any changes.



\section{Moore family result for LLFP}
\label{sec:moore-family}
In this section we establish a Moore family result for LLFP that
guarantees that there always is a unique best solution for LLFP
clauses.
\begin{definition} A Moore family is a subset $Y$ of a complete
  lattice $\L=(\L,\sqsubseteq)$ that is closed under greatest lower
  bounds: $\forall Y' \subseteq Y: \bigsqcap Y' \in Y$.
\end{definition}
It follows that a Moore family always contains a least element,
$\bigsqcap Y$, and a greatest element, $\bigsqcap \emptyset$, which
equals the greatest element, $\top$, from $\L$; in particular, a Moore
family is never empty. The property is also called the model
intersection property, since whenever we take a {\it meet} of a number
of models we still get a model.

Assume $cls$ has the form $cl_1, \ldots, cl_s$, and let $\Delta = \{
\varrho: \prod_{k} \mathcal{R}_{/k} \rightarrow \mathcal{U}^k
\rightarrow \L \}$ denote the set of interpretations $\varrho$ of
predicate symbols in $\mathcal{R}$. We also define the lexicographical
ordering $\leqR$ such that $\varrho_1 \leqR \varrho_2$ if and only if
there is some $1 \leq j \leq s$. where $s$ is the order of the
formula, such that the following properties hold:
\begin{enumerate}[(a)]
\item $\varrho_1(R)=\varrho_2(R)$ for all $R \in \mathcal{R}$
  with $\rank(R)<j$, \label{itm:alfp-lat-ord-rank-less}
\item $\varrho_1(R) \sqsubseteq \varrho_2(R)$ for all $R \in
  \mathcal{R}$ with $\rank(R)=j$, \label{itm:alfp-lat-ord-rank-eq}
\item either $j=s$ or $\varrho_1(R) \sqsubset \varrho_2(R)$
  for at least one $R \in \mathcal{R}$ with $\rank(R)=j$. \label{itm:alfp-lat-ord-rank-s}
\end{enumerate}
We say that $\varrho_1(R) \sqsubseteq \varrho_2(R)$ if and only if
$\forall \vec{a}\in \U^k: \varrho_1(R)(\vec{a}) \sqsubseteq
\varrho_2(R)(\vec{a})$, where $k \geq 0$ is the arity of $R$. Notice
that in the case $s=1$, the above ordering coincides with lattice
ordering $\sqsubseteq$. Intuitively, the lexicographical ordering
$\leqR$ orders the relations strata by strata starting with the strata
$0$. It is essentially analogous to the lexicographical ordering on
strings, which is based on the alphabetical order of their characters.

\begin{lemma}\label{lemma:partial-order-alfp-lat}
$\leqR$ defines a partial order.
\end{lemma}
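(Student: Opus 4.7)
The plan is to verify the three defining properties of a partial order: reflexivity, antisymmetry, and transitivity. In each case the argument proceeds by choosing an appropriate witness $j \in \{1,\dots,s\}$ and then checking the three clauses (a)--(c) in the definition of $\leqR$.

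For reflexivity, I would simply take $j=s$ as the witness for $\varrho \leqR \varrho$: clause (a) holds trivially for all $R$ with $\rank(R)<s$, clause (b) holds by reflexivity of $\sqsubseteq$ on the underlying lattice (extended pointwise to interpretations of predicates), and clause (c) is discharged by the disjunct $j=s$.

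For antisymmetry, suppose $\varrho_1 \leqR \varrho_2$ witnessed by $j_1$ and $\varrho_2 \leqR \varrho_1$ witnessed by $j_2$. I would split on the three cases $j_1<j_2$, $j_1>j_2$, and $j_1=j_2$. The two strict cases both yield a contradiction: if, say, $j_1<j_2$, then clause (c) for the first relation (noting $j_1\neq s$ since $j_2>j_1$) forces a strict inequality $\varrho_1(R)\sqsubset \varrho_2(R)$ at some $R$ of rank $j_1$, while clause (a) for the second relation (since $\rank(R)=j_1<j_2$) forces $\varrho_2(R)=\varrho_1(R)$. In the remaining case $j_1=j_2=j$, clauses (a) give equality below rank $j$ and clauses (b) combined give equality at rank $j$; this in turn forces $j=s$ (else clause (c) of either direction would demand a strict inequality at rank $j$, contradicting equality), so in fact the two interpretations agree on every rank and coincide as maps on $\mathcal{R}$.

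For transitivity, given $\varrho_1 \leqR \varrho_2$ with witness $j_1$ and $\varrho_2 \leqR \varrho_3$ with witness $j_2$, the natural choice of witness for $\varrho_1 \leqR \varrho_3$ is $j = \min(j_1,j_2)$. I would again split on whether $j_1<j_2$, $j_1>j_2$, or $j_1=j_2$, and in each case verify clauses (a), (b), (c) by combining the corresponding facts about the two hypotheses: equality below $j$ comes from chaining equalities; the inequality at rank $j$ comes from combining $\sqsubseteq$ with an equality (in the asymmetric cases) or two $\sqsubseteq$'s (in the equal case) and using transitivity of the lattice order; and clause (c) is maintained because in the asymmetric cases strict inequality is preserved by chaining with an equality, and in the symmetric case either $j=s$ or strict inequality at rank $j$ from the first hypothesis survives composition with $\sqsubseteq$ from the second.

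No single step is genuinely difficult, but the most delicate point is the bookkeeping in antisymmetry and transitivity when the two witnesses differ: one must repeatedly appeal to the fact that clause (a) reaches \emph{strictly} below the witness rank, while clauses (b) and (c) govern exactly the witness rank, and nothing is asserted above it. Keeping this stratum-by-stratum picture straight is the only thing that requires care.
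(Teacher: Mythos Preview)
Your proposal is correct and follows essentially the same approach as the paper: for reflexivity you both take $j=s$; for transitivity you both take $j=\min(j_1,j_2)$ and verify (a)--(c), with the strictness in (c) coming from one of the two hypotheses composed with a non-strict inequality from the other. The only organisational difference is in antisymmetry: the paper argues by contradiction, picking the minimal rank $j$ at which $\varrho_1$ and $\varrho_2$ differ and observing that (b) in both directions forces equality there, whereas you case-split on the relationship between the two witnesses $j_1,j_2$; both arguments are standard and yield the same conclusion.
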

\begin{proof}
See Appendix \ref{proof:lemma:partial-order-alfp-lat}.
\end{proof}

Assume $cls$ has the form $cl_1, \cdots, cl_s$ where $cl_j$ is the
clause corresponding to stratum $j$, and let $\mathcal{R}_j$ denote
the set of all relation symbols $R$ defined in $cl_1, \cdots, cl_j$
taking $\mathcal{R}_0 = \emptyset$.  Let $M \subseteq \Delta$ denote a
set of assignments which map relation symbols to relations.

\begin{lemma}\label{lemma:complete-lattice-alfp-lat}
  $\Delta=(\Delta, \leqR)$ is a complete lattice with the
  greatest lower bound given by
\[
\left(\glb M\right)(R) = \lambda \vec{a} . \bigsqcap \left\{ {\varrho}(R)(\vec{a}) \mid \varrho \in
M_{\rank(R)} \right\}
\]
where
\[
M_j = \left\{ \varrho \in M \mid \forall R' \ \rank(R') < j:
\varrho(R') = \left( \glb M \right) \left(R'\right) \right\}
\]
\end{lemma}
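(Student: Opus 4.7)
I would check three items: (i) the recursive formula for $\glb M$ is well-posed; (ii) $\glb M$ is a lower bound for $M$ under $\leqR$; (iii) every other lower bound is dominated by it. Since $\glb \emptyset$ then yields the constantly-$\top$ interpretation as the top of $\Delta$, having all meets suffices for a complete lattice. For (i), I would proceed by well-founded recursion on $\rank$: the value $(\glb M)(R)$ for $\rank(R) = j$ refers only to $M_j$, which in turn refers to $\glb M$ at relations of rank strictly less than $j$. Hence one can define $\glb M$ stratum by stratum on the finite rank set $\{0,\ldots,s\}$, and each required meet exists because $\L$ is complete.

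\textbf{Lower bound.} Fix $\varrho \in M$. If $\varrho = \glb M$, the conditions of $\leqR$ are satisfied by choosing $j=s$. Otherwise, let $j$ be the \emph{least} rank at which some predicate $R$ has $\varrho(R) \neq (\glb M)(R)$. Then $\varrho$ agrees with $\glb M$ at all relations of rank less than $j$, which is precisely the membership criterion for $M_j$, so $\varrho \in M_j$. The formula then forces $(\glb M)(R)(\vec{a}) \sqsubseteq \varrho(R)(\vec{a})$ for every rank-$j$ predicate and every $\vec{a}$. Conditions (\ref{itm:alfp-lat-ord-rank-less}) and (\ref{itm:alfp-lat-ord-rank-eq}) follow, and (\ref{itm:alfp-lat-ord-rank-s}) holds by the very choice of $j$ as a point of strict disagreement.

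\textbf{Greatest lower bound, and the main obstacle.} Let $\varrho'$ be any other lower bound. If $\varrho' = \glb M$, take $j = s$; otherwise take $j$ as the least rank of disagreement between $\varrho'$ and $\glb M$. The key step is to compare $\varrho'$ against an arbitrary $\varrho \in M_j$: by the definition of $M_j$, $\varrho$ agrees with $\glb M$ on ranks $<j$, and by minimality so does $\varrho'$, hence $\varrho$ and $\varrho'$ agree on all ranks $<j$. Applying $\varrho' \leqR \varrho$, the witness stratum in that lex comparison must therefore be $\ge j$, and condition (\ref{itm:alfp-lat-ord-rank-eq}) at this stratum forces $\varrho'(R) \sqsubseteq \varrho(R)$ pointwise on rank-$j$ predicates (trivially if the witness stratum exceeds $j$, by equality). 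Taking meets over $\varrho \in M_j$ then yields $\varrho'(R)(\vec a) \sqsubseteq (\glb M)(R)(\vec a)$ on stratum $j$; the case $M_j = \emptyset$ is trivial since the formula makes $(\glb M)(R) = \top$. The main delicacy is exactly that $\leqR$ is lexicographic rather than pointwise, so one cannot simply use a single global pointwise meet over $M$: the restriction $M_j$ is engineered so that both $\varrho'$ and every competing $\varrho$ coincide with $\glb M$ on lower strata, which is what reduces the lex comparison at rank $j$ to a pointwise one and makes the stratified induction go through.
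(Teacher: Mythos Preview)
Your proposal is correct and follows essentially the same strategy as the paper: establish that $\glb M$ is a lower bound by locating the appropriate witness stratum $j$ for each $\varrho\in M$, then show it dominates any other lower bound $\varrho'$ by again picking the least stratum of disagreement. The only cosmetic difference is that for the lower-bound direction the paper phrases the choice of $j$ as ``the maximum $j$ with $\varrho\in M_j$'' rather than ``the least rank of disagreement,'' but these coincide; your treatment is in fact slightly more explicit than the paper's in justifying why the witness stratum in $\varrho'\leqR\varrho$ must be $\ge j$ and in handling the degenerate case $M_j=\emptyset$.
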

\begin{proof}
See Appendix \ref{proof:lemma:complete-lattice-alfp-lat}
\end{proof}

Note that $\glb M$ is well defined by induction on $j$ observing that
$M_0=M$ and $M_j \subseteq M_{j-1}$.

\begin{proposition}\label{prop:moore-family-alfp-lat}
  Assume $cls$ is a stratified LLFP clause sequence, $\varsigma_0$ and
  $\zeta_0$ are interpretations of free variables and function symbols
  in $cls$, respectively. Furthermore, $\varrho_0$ is an
  interpretation of all relations of rank 0. Then $\{ \varrho \mid
  (\varrho, \zeta_0, \varsigma_0) \models_{\beta} cls \wedge \forall
  R: \rank(R) = 0 \Rightarrow \varrho_0(R) \sqsubseteq \varrho(R) \}$
  is a Moore family.
\end{proposition}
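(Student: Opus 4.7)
The plan is to fix an arbitrary subset
$M' \subseteq \{\varrho \mid (\varrho, \zeta_0, \varsigma_0) \models_{\beta} cls \ \wedge \ \forall R: \rank(R) = 0 \Rightarrow \varrho_0(R) \sqsubseteq \varrho(R)\}$
and verify that $\glb M'$ lies in this set. There are two obligations: the rank-$0$ side condition, and satisfaction of $cls$. The first is immediate from Lemma \ref{lemma:complete-lattice-alfp-lat}, since $M_0 = M'$ and hence for every $R$ with $\rank(R)=0$ we have $(\glb M')(R)(\vec a) = \bigsqcap_{\varrho \in M'} \varrho(R)(\vec a) \sqsupseteq \varrho_0(R)(\vec a)$, each term in the meet being an upper bound of $\varrho_0(R)(\vec a)$ by assumption.

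The heart of the argument is showing $(\glb M', \zeta_0, \varsigma_0) \models_{\beta} cl_j$ for each stratum $j$, by induction on $j$. At stratum $j$ I would prove two companion statements by a simultaneous induction on the syntactic structure of preconditions and clauses, where $\varsigma$ ranges over extensions of $\varsigma_0$ introduced by the quantifier cases. \emph{Upward transfer of preconditions}: for any precondition $pre$ occurring in $cl_j$, if $(\glb M', \varsigma) \models_{\beta} pre$ then $(\varrho, \varsigma) \models_{\beta} pre$ for every $\varrho \in M_j$. The nontrivial base cases are positive queries $R(\vec u;V)$, which are monotone in $\varrho(R)$ because the defining inequality $\varrho(R)(\sem{\vec u}) \sqsupseteq \varsigma(V)$ is preserved as $\varrho(R)$ grows, together with the fact that $(\glb M')(R) \sqsubseteq \varrho(R)$ for every $\varrho \in M_j$ (whether $\rank(R) < j$, where equality holds by construction of $M_j$, or $\rank(R) = j$, by the meet characterisation in Lemma \ref{lemma:complete-lattice-alfp-lat}); and negative queries $\neg R(\vec u;V)$, where stratification forces $\rank(R) < j$, so that $\varrho(R) = (\glb M')(R)$ for all $\varrho \in M_j$ and the truth value is identical. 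The $Y(u)$ atoms do not depend on $\varrho$, and the connectives and existentials go through by the inductive hypothesis. \emph{Downward transfer of clauses}: for any sub-clause $cl'$ occurring in $cl_j$, if $(\varrho, \zeta_0, \varsigma) \models_{\beta} cl'$ for every $\varrho \in M_j$, then $(\glb M', \zeta_0, \varsigma) \models_{\beta} cl'$. The crucial base case is an assertion $R(\vec u;\Vf)$, which necessarily has $\rank(R) = j$; the hypothesis exhibits $\sem{\Vf}(\zeta_0,\varsigma)$ as a lower bound of $\{\varrho(R)(\sem{\vec u}(\zeta_0,\varsigma)) \mid \varrho \in M_j\}$, whose infimum is exactly $(\glb M')(R)(\sem{\vec u}(\zeta_0,\varsigma))$ by the Lemma.

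With both statements in hand, the case $pre \Rightarrow cl'$ combines them: assuming $(\glb M', \varsigma) \models_{\beta} pre$, upward transfer yields satisfaction for every $\varrho \in M_j$; each such $\varrho$ lies in $M'$ and therefore satisfies $cl_j$, hence its consequent $cl'$; downward transfer then passes satisfaction to $\glb M'$. Conjunction, the universal quantifiers, and $\mathbf{1}$ are routine. The edge case $M_j = \emptyset$ is benign since then $(\glb M')(R) = \lambda \vec a.\top$ on rank-$j$ relations, which trivially dominates every $\sem{\Vf}$. The main obstacle is the stratification bookkeeping: one must track carefully which $M_i$ a given $\varrho$ belongs to at each point of the induction, and exploit the fact that every negative query in $cl_j$ concerns a stratum on which $M_j$ has already been pruned to coincide with $\glb M'$, thereby neutralising the anti-monotonicity of $\complement$ that would otherwise obstruct the transfer of negated atoms.
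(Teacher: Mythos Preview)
Your proposal is correct and follows essentially the same route as the paper. The paper factors the argument into two auxiliary lemmas: Lemma~\ref{lemma:glb-pre-alfp-lat} (if $(\glb M,\varsigma)\models_\beta pre$ with $pre$ in $cl_j$, then $(\varrho',\varsigma)\models_\beta pre$ for all $\varrho'\in M_j$) and Lemma~\ref{lemma:glb-cl-alfp-lat} (if every $\varrho'\in M$ satisfies $cl_j$, then so does $\glb M$), each proved by structural induction inside an outer induction on $j$; these are precisely your ``upward transfer of preconditions'' and ``downward transfer of clauses'', and the implication case is handled by combining them exactly as you describe. Your treatment is in fact slightly more careful than the paper's terse ``the result follows from Lemma~\ref{lemma:glb-cl-alfp-lat}'': you spell out the rank-$0$ side condition and the empty-$M_j$ edge case, and you state the downward transfer with hypothesis quantified over $M_j$ rather than $M$, which is what the inductive step for $pre\Rightarrow cl$ actually needs. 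One small wording quibble: in that inductive step you write ``each such $\varrho$ lies in $M'$ and therefore satisfies $cl_j$, hence its consequent $cl'$''---it is cleaner (and matches the paper) to invoke directly the structural-induction hypothesis that every $\varrho\in M_j$ satisfies $pre\Rightarrow cl'$ under $\varsigma$, rather than detouring through satisfaction of the top-level $cl_j$ under $\varsigma_0$.
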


\begin{proof}
See Appendix \ref{proof:prop:moore-family-alfp-lat}
\end{proof}

The result ensures that the approach falls within the framework of
Abstract Interpretation \cite{bib:cousot77,bib:cousot79}; hence we can
be sure that there always is a single best solution for the analysis
problem under consideration, namely the one defined in Proposition
\ref{prop:moore-family-alfp-lat}.

\section{The Algorithm}
\label{sec:algo}
In this section we present the algorithm for solving LLFP clause
sequences, which extends the differential worklist algorithm by
Nielson et al. \cite{bib:ssforalfp,bib:sssuite}. The algorithm
computes the relations in increasing order on their rank and therefore
the negations present no obstacles. It completely abandons a
worklist-like data structures, which are typical for most classical
iterative fixpoint algorithms
\cite{DBLP:journals/scp/FechtS99}. Instead, we adapt the recursive
topdown approach of Le Charlier and van Hentenryck
\cite{bib:LeCharlier92} which is enhanced by continuation based
semi-naive iteration \cite{bib:Balbin87,bib:FechtSeidl98}.


In the following we assume that prior to solving the LLFP formula, all
the clauses are transformed into a form such that all \textit{applied}
occurrences of variables $Y \in \mathcal{Y}$ in preconditions,
i.e.~$Y(u)$, are not followed by their \textit{defining} occurrences,
i.e.~$R(\vec u;Y)$ and $\neg R(\vec u;Y)$. This is necessary to
correctly perform late bindings of variables $Y \in \mathcal{Y}$ in
the presence of $Y(u)$ construct.

The algorithm operates with (intermediate) representations of the two
interpretations $\varsigma$ and $\varrho$ of the semantics; we
shall call them {\tt env} and {\tt result}, respectively, in the
following. The data structure {\tt env} is supplied as a parameter to
the functions of the algorithms, and it represents partial
environment. The data structure {\tt result} is an imperative data
structure that is updated as we progress.

The partial environment {\tt env} is implemented as a map from
variables to their optional values. In the case the variable is
undefined it is mapped into $\None$. Otherwise, depending on its type
it is mapped to $\Some(a)$ or $\Some(l)$, which means that the
variable is bound to $a \in{\cal U}$, or $l \in \Lnb$,
respectively. The main operation on {\tt env} is the function
\textsc{unify}, defined as follows
$$
\textsc{unify}(\beta, {\tt env},(\vec{u};V), (\vec{a};l)) = 
\left\{\begin{array}{ll}

\emptyset & \hbox{if } \textsc{unify}_\textsc{U} ({\tt env},
\vec{u}, \vec{a}) = \mbox{fail}\\

\textsc{unify}_\textsc{L}(\beta, {\tt env'}, V, l) & \hbox{if } \textsc{unify}_\textsc{U} ({\tt env},
\vec{u}, \vec{a}) = {\tt env'}\\
\end{array}\right.
$$
It uses two auxiliary functions that perform unifications on each
component of the relation. For the first component, which ranges over
the universe $\U$, the function is given by
$$
\textsc{unify}_\textsc{U}({\tt env},u,a) = 
\left\{\begin{array}{ll}

{\tt env} & \hbox{if } (u\in\mathcal{X}\wedge{\tt
  env}[u]=\Some(a))\vee u=a\\

{\tt env}[u\mapsto \Some(a)] &  \hbox{if }
u\in\mathcal{X}\wedge{\tt env}[u]=\None\\

\hbox{fail} & \hbox{otherwise}
\end{array}\right.
$$
It performs a unification of an argument $u$ with an element $a \in
\U$ in the environment {\tt env}. In the case when the unification
succeeds the modified environment is returned, otherwise the function
fails. The funcion is extended to $k$-tuples in a straightforward
way. The definition of the unification function for the lattice
component is given by
$$
\textsc{unify}_\textsc{L}(\beta, {\tt env}, V, l) = 
\left\{\begin{array}{l}

    \{ {\tt env}[ V \mapsto \Some(l \sqcap l_V) ] \} \\
    \hspace{5mm} 
    \hbox{if } 
    V\in\mathcal{Y}\wedge{\tt env}[V]=\Some(l_V)\wedge l \sqcap l_V \neq \bot \\

    \{ {\tt env}[V\mapsto \Some(l)] \} \\
    \hspace{5mm} 
    \hbox{if }
    V\in\mathcal{Y}\wedge{\tt env}[V]=\None \wedge l \neq \bot\\

    \{ {\tt env} \} \hbox{ if } V = [u] \wedge\\
    \hspace{5mm} 
    ((u\in\mathcal{X}\wedge{\tt env}[u]=\Some(a))\vee u=a)
    \wedge \beta(a) \sqsubseteq l \\

    \{ {\tt env}[u\mapsto \Some(a)] \mid \beta(a) \sqsubseteq l \} \\
    \hspace{5mm} 
    \hbox{if }
    V = [u] \wedge u\in\mathcal{X}\wedge {\tt env}[u]=\None\\

\emptyset \ \ \ \hbox{otherwise}
\end{array}\right.
$$
The function is parametrized with $\beta : \U \rightarrow \L$, defined
in Section \ref{sec:alfp-lat}. It performs a unification of an lattice
term $V$ with an element $l \in \L$ in the environment {\tt env}. In
the case when the unification succeeds the set of unified environments
is returned, otherwise the function returns empty set.

The other important operation on the partial environment is given by
the function {\sc unifiable}. The function when applied to {\tt env}
and a tuple $(\vec{u};V)$, returns a set of tuples for which {\sc
  unify} would succeed. The function is defined by means of two
auxiliary functions, formally we have
$$
\textsc{unifiable}(\texttt{env},(\vec{u};V)) =
(\textsc{unifiable}_\textsc{U}(\texttt{env},\vec{u});
\textsc{unifiable}_\textsc{L}(\texttt{env},V))
$$
where
$$
\textsc{unifiable}_\textsc{U}(\texttt{env},u) =
\left\{
\begin{array}{cl}
\{ a \} & \hbox{if } (u\in\mathcal{X}\wedge{\tt env}[u]=\Some(a))\vee
u=a\\

\U & \hbox{if } u\in\mathcal{X}\wedge{\tt env}[u]=\None\\
\end{array}
\right.
$$
and
$$
\begin{array}{ll}
\textsc{unifiable}_\textsc{L}(\texttt{env},V) =
\left\{
\begin{array}{ll}

l & \hbox{if } V\in\mathcal{Y}\wedge{\tt env}[V]=\Some(l)\\

\top & \hbox{if } V\in\mathcal{Y}\wedge{\tt env}[V]=\None\\

\beta(a) & \hbox{if } V=[u] \wedge (u=a \vee \\
& \hspace{3mm} (u\in\mathcal{X}\wedge{\tt env}[u]=\Some(a)))\\

\bigsqcup \{ \beta(a) \mid a \in \U \} & \hbox{if } V=[u] \wedge
u\in\mathcal{X}\wedge\\
& \hspace{3mm} {\tt env}[u]=\None\\

\sem{f}(l) & \hbox{if }
V=f(\vec{V}) \wedge \\
& \hspace{3mm} l = \textsc{unifiable}_\textsc{L}(\texttt{env},\vec{V}) \\

\end{array}
\right.
\end{array}
$$
Both auxiliary funcions are extended to $k$-tuples in a
straightforward way.

The global data structure \texttt{result}, which is updated
incrementally during computations, is represented as a mapping from
predicate names to the prefix trees that for each predicate $R$ record
the tuples currently known to belong to $R$. There are three main
operations on the data structure \texttt{result}: the operation
\texttt{result.}\textsc{has} checks whether a given tuple is
associated with a given predicate, the operation
\texttt{result.}\textsc{sub} returns a list of the tuples associated
with a given predicate and the operation \texttt{result.}\textsc{add}
adds a tuple to the interpretation of a given predicate.

Since $\varrho$ is updated as the algorithm progresses, it may
happen that a query $R(\vec v; V)$ inside a precondition fails to be
satisfied at the given point in time, but may hold in the future when
a new tuple $(\vec a; l)$ is added to the interpretation of $R$. If we
are not careful we may lose the consequences that adding $(\vec a; l)$
to $R$ will have on the contents of other predicates. This gives rise
to the data structure \texttt{infl} that records computations that
have to be resumed for the new tuples; these future computations are
called \emph{consumers}. The \texttt{infl} data structure is also
represented as a mapping from the predicate names to prefix trees that
for each predicate $R$ record consumers that have to be resumed when
the interpretation of $R$ is updated. There are two main operations on
the data structure {\tt infl}: the operation
\texttt{infl.}\textsc{register} that adds a new consumer for a given
predicate and \texttt{infl.}\textsc{consumers} that returns all the
consumers currently associated with a given predicate.

In the algorithm, we have one function for each of the three syntactic
categories. The function \textsc{solve} takes a \emph{clause sequence}
as input and calls the function \textsc{execute} on each of the
individual clauses
\[
\textsc{solve}(cl_1,\ldots,cl_s) =\textsc{execute}(cl_1)[\ ]; \ldots;
\textsc{execute}(cl_s)[\ ]
\]
where we write [\ ] for the empty environment reflecting that we have
no free variables in the clause sequences.

Let us now turn to the description of the function
\textsc{execute}. The function takes a \emph{clause} $cl$ as a
parameter and a representation {\tt env} of the interpretation of the
variables. We have one case for each of the forms of $cl$; the pseudo
code is given in Figure \ref{figure:function-execute}.
\begin{figure}
\centering
\begin{minipage}{.77\textwidth}
\begin{algorithmic}
\State \Call{execute}{$R(\vec{v};V)}$\texttt{env} =
\State \indent \textbf{let} \Call{iterFun}{} $(\vec{a};l)$ =
\State \indent \indent \textbf{match} \texttt{result.}\Call{has}{$R, (\vec{a};l)$} \textbf{with}
\State \indent \indent \indent $|$ $true$ $\rightarrow$ ()
\State \indent \indent \indent $|$ $false$ $\rightarrow$
\State \indent \indent \indent \indent \texttt{result.}\Call{add}{$R,(\vec{a};l)$}
\State \indent \indent \indent \indent \Call{iter}{} (\textbf{fun} $f$ $\rightarrow$ $f$ $(\vec{a};l)$) (\texttt{infl.}\Call{consumers}{} $R)$
\State \indent \textbf{in} \Call{iter}{} \Call{iterFun}{} (\Call{unifiable}{\texttt{env},$(\vec{v};V)$})
\end{algorithmic}
\begin{algorithmic}
\State$\textsc{execute}({\bf 1}){\tt env} = ()$
\end{algorithmic}
\begin{algorithmic}
  \State$\textsc{execute}(cl_1\wedge cl_2){\tt env} =
  \textsc{execute}(cl_1){\tt env}; \textsc{execute}(cl_2){\tt env}$
\end{algorithmic}
\begin{algorithmic}
  \State$\textsc{execute}(pre\Rightarrow cl){\tt env} =
  \textsc{check}(pre,\textsc{execute}(cl)){\tt env}$
\end{algorithmic}
\begin{algorithmic}
  \State$\textsc{execute}(\forall x: cl){\tt env} =
  {\textsc{execute}}(cl)({\tt env}[x \mapsto \None])$
\end{algorithmic}
\end{minipage}
\caption[]{The \textsc{execute} function.}
\label{figure:function-execute}
\end{figure}
Let us explain the case of an assertion first. The algorithm uses the
auxiliary function \textsc{iter}, which applies the function
\textsc{iterFun} to each element of the list of tuples that can be
unified with the argument $(\vec{v};V)$. Given a tuple $(\vec{a};l)$,
the function \textsc{iterFun} adds the tuple to the interpretation of
$R$ stored in \texttt{result} if it is not already present. If the
\textsc{add} operation succeeds, we first create a list of all the
consumers currently registered for predicate $R$ by calling the
function \texttt{infl.}\textsc{consumers}. Thereafter, we resume the
computations by iterating over the list of consumers and calling
corresponding continuations.
The cases of always true clause, {\bf 1}, is straightforward; the
function simply returns the unit, without performing any other
actions.
In the case of the conjunction of clauses the algorithm calls the
\textsc{execute} function for both conjuncts and the current
environment \texttt{env}.
In the case of implication we make use of the function \textsc{check}
that in addition to the precondition and the environment also takes
the continuation $\textsc{execute}(cl)$ as an argument.
In the case of universal quantification, we simply extend the
environment to record that the value of the new variable is unknown
and then we recurse. The case of universal quantification over a
variable $Y \in \mathcal{Y}$ is exactly the same and hence omitted.

Now, let us present the function \textsc{check}. It takes a
\emph{precondition}, a continuation and an environment as
parameters. The pseudo code is given in Figure
\ref{fig:function-check}.
\begin{figure}
\centering
\begin{minipage}{.95\textwidth}
\begin{algorithmic}
\State \Call{check}{$R(\vec{v};V),next}$\texttt{env} =
\State \indent \textbf{let} \Call{consumer}{} $(\vec{a};l)$ =
\State \indent \indent \textbf{match} \Call{unify}{\texttt{env}$, (\vec{v};V), (\vec{a};l)$} \textbf{with}
\State \indent \indent \indent $|$ fail $\rightarrow$ ()
\State \indent \indent \indent $|$ \texttt{envs} $\rightarrow$
\Call{iter}{} $next$ \texttt{envs}
\State \indent \textbf{in} \texttt{infl.}\Call{register}{}($R$,\Call{consumer}{}); \Call{iter}{} \Call{consumer}{} (\texttt{result.}\Call{sub}{} $R$)
\end{algorithmic}
\begin{algorithmic}
\State \Call{check}{$\neg R(\vec{v};V),next}$\texttt{env} =
\State \indent \textbf{let} \Call{iterFun}{} $(\vec{a};l)$ =
\State \indent \indent \textbf{match} \texttt{result.}\Call{has}{$R, (\vec{a};l)$} \textbf{with}
\State \indent \indent \indent $|$ $true$ $\rightarrow$ ()
\State \indent \indent \indent $|$ $false$ $\rightarrow$ \Call{iter}{}
$next$ (\Call{unify}{\texttt{env}$, (\vec{v};V), (\vec{a};l)$})
\State \indent \textbf{in} \Call{iter}{} \Call{iterFun}{} (\Call{unifiable}{\texttt{env}$, (\vec{v};V)$})
\end{algorithmic}
\begin{algorithmic}
\State \Call{check}{$Y(x),next}$\texttt{env} =
\State \indent \textbf{let} \texttt{env'} = \textbf{if}
\texttt{env}$(Y) = \Some(l)$ \textbf{then} \texttt{env} \textbf{else} \texttt{env}$[Y
\mapsto \top]$
\State \indent \textbf{in} \textbf{let} \textsc{f} a = $\ifit \Some(\beta(a))
\sqsubseteq \
${\tt env'}$(Y) \then next \ ${\tt env'}$[x \mapsto a] \elseit ()$
\State \indent \textbf{in} \textbf{match} \texttt{env'}(x) \textbf{with}
\State \indent \indent $|$ $\Some(a)$ $\rightarrow$ \textsc{f} a
\State \indent \indent $|$ $\None$ $\rightarrow$ \Call{iter}{} \textsc{f} $U$
\end{algorithmic}
\begin{algorithmic}
  \State$\textsc{check}(pre_1\wedge pre_2,next){\tt env} =
  \textsc{check}(pre_1,\textsc{check}(pre_2,next)){\tt env}$
\end{algorithmic}
\begin{algorithmic}
\State \Call{check}{$pre_{1} \vee pre_{2},next}$\texttt{env} = \Call{check}{$pre_1,next$}\texttt{env}; \Call{check}{$pre_2,next$}\texttt{env}
\end{algorithmic}
\begin{algorithmic}
\State \Call{check}{$\exists x: pre,next}$\texttt{env} =
\Call{check}{$pre,next\ \circ\ $(\textsc{remove} $x$)}(\texttt{env}$[x \mapsto \None]$)
\end{algorithmic}
\end{minipage}
\caption[]{The \textsc{check} function.}
\label{fig:function-check}
\end{figure}
In the case of positive queries we first ensure that the consumer is
registered in {\tt infl}, by calling function \textsc{register}, so
that future tuples associated with $R$ will be processed. Thereafter,
the function inspects the data structure \texttt{result} to obtain the
list of tuples associated with the predicate $R$. Then, the auxiliary
function \textsc{consumer} unifies $(\vec{v};V)$ with each tuple; and
if the operation succeeds, the continuation $next$ is invoked on each
of the updated new environments in the returned set \texttt{envs}.
In the case of negated query, the algorithm first computes the tuples
unifiable with $(\vec{v};V)$ in the environment \texttt{env}. Then,
for each tuple it checks whether the tuple is already in $R$ and if
not, the tuple is unified with $(\vec{v};V)$ to produce set of new
environments. Thereafter, the continuation $next$ is evaluated in each
of the environments contained in the returned set. Notice that in the
case of negative queries we do not register a consumer for the
relation $R$. This is because the stratification condition introduced
in Definition \ref{def:stratification} ensures that the relation is
fully evaluated before it is queried negatively. Thus, there is no
need to register future computations since the interpretation of $R$
will not change.
Now, let us consider function \textsc{check} in the case of $Y(x)$,
where $x \in \mathcal{X}$. The function begins with creating an
environment {\tt env'} that is exactly as {\tt env} except that the
binding for the variable $Y$ is set to $\top$ in the case $Y$ is
undefined in {\tt env}. Then, we define an auxiliary function that
checks whether {\tt env'}$(Y)$ over-approximates the abstraction of an
argument $a$, denoted by $\beta(a)$, and if so the continuation is
called in the environment {\tt env'}$[x \mapsto a]$. Finally, the
function checks the binding for the variable $x$ in the environment
{\tt env'} and if it is bound to $\Some(a)$ the function \textsc{f}
applied to $a$ is called. Otherwise, the function \textsc{f} is called
for each element of the universe, using the \textsc{iter}
function. The case of $Y(a)$, where $a \in \U$ is essentially the same
as the case explained above, except that we do not have to handle the
case when $x \in \mathcal{X}$ is undefined in {\tt env}.
For conjunction of preconditions we exploit a continuation passing
programming style. More precisely, we call the \textsc{check} function
for the precondition $pre_1$, and as a continuation we pass a call to
the \textsc{check} function partially applied to the precondition
$pre_2$ and the continuation $next$.
In the case of disjunction of preconditions the function simply checks
preconditions $pre_1$ and $pre_2$ respectively in the current
environment \texttt{env}. In order to be efficient we use memoization;
this means that if both checks yield the same bindings of variables,
the second check does not need to consider the continuation, as it has
already been done.
The algorithm for existential quantification checks the precondition
$pre$ in the environment extended with the quantified variable. The
continuation that is passed is a composition of functions $next$ and
\textsc{remove} $x$, where the function \textsc{remove} removes
variable passed as a first argument from the environment passed as a
second argument. In order to be efficient we again use a memoization
to avoid redundant computations. The case of existential
quantification over a variable $Y \in \mathcal{Y}$ is exactly the same
and hence omitted.






\section{Conclusions and Future Work}
\label{sec:conclusions}
In the paper we introduced the LLFP logic, which is an expressive
formalism for specifying static analysis problems. It lifts the
limitation of logics such as Datalog and ALFP by allowing
interpretation over complete lattices satisfying Ascending Chain
Condition. Thanks to the declarative style, the analysis
specifications are easy to analyse for their correctness. 

We established a Moore Family result that guarantees that there always
is a unique best solution for the LLFP formulae. More generally this
ensures that the approach taken falls within the general Abstract
Interpretation framework. We also developed a state-of-the-art solving
algorithm for LLFP, which is a continuation passing style algorithm,
which represents relations as prefix trees. We showed that the logic
and the associated solver can be used for rapid prototyping of
sophisticated static analyses by presenting the formulation of
interval analysis.

As a future work we plan to implement a front-end to automatically
extract analysis relations from program source code, and perform
experiments on real-world programs in order to evaluate the
performance of the LLFP solver. Furthermore, we would like to lift the
Ascending Chain Condition and use e.g. \textit{widening operator}
\cite{DBLP:journals/jlp/CousotC92,DBLP:conf/plilp/CousotC92} in order
to ensure termination of the least fixed point computation.

\bibliographystyle{abbrv}
\bibliography{paper}

\newpage
\appendix

These appendices are not intended for publication and references to
them will be removed in the final version.

\section{Proof of Lemma \ref{lemma:partial-order-alfp-lat}}
\label{proof:lemma:partial-order-alfp-lat}

\begin{proof}

{\bf Reflexivity} $\forall \varrho \in \Delta: \varrho \leqR \varrho$.

\noindent To show that $\varrho \leqR \varrho$ let us take $j = s$. If
$\rank(R)<j$ then $\varrho(R)=\varrho(R)$ as required. Otherwise if
$\rank(R)=j$ then from $\varrho(R)=\varrho(R)$ we get $\varrho(R)
\sqsubseteq \varrho(R)$. Thus we get the required $\varrho \leqR
\varrho$.

\noindent {\bf Transitivity} $\forall \varrho_1, \varrho_2, \varrho_3
\in \Delta: \varrho_1 \leqR \varrho_2 \wedge \varrho_2
\leqR \varrho_3 \Rightarrow \varrho_1 \leqR \varrho_3$.

\noindent Let us assume that $\varrho_1 \leqR \varrho_2 \wedge
\varrho_2 \leqR \varrho_3$. From $\varrho_i \leqR
\varrho_{i+1}$ we have $j_i$ such that conditions
(\ref{itm:alfp-lat-ord-rank-less})--(\ref{itm:alfp-lat-ord-rank-s})
are fulfilled for $i=1,2$. Let us take $j$ to be the minimum of $j_1$
and $j_2$. Now we need to verify that conditions
(\ref{itm:alfp-lat-ord-rank-less})--(\ref{itm:alfp-lat-ord-rank-s})
hold for $j$. If $\rank(R)<j$ we have $\varrho_1(R) = \varrho_2(R)$ and
$\varrho_2(R) = \varrho_3(R)$. It follows that $\varrho_1(R) =
\varrho_3(R)$, hence (\ref{itm:alfp-lat-ord-rank-less}) holds. Now let
us assume that $\rank(R)=j$. We have $\varrho_1(R) \sqsubseteq
\varrho_2(R)$ and $\varrho_2(R) \sqsubseteq \varrho_3(R)$ and from
transitivity of $\sqsubseteq$ we get $\varrho_1(R) \sqsubseteq
\varrho_3(R)$, which gives (\ref{itm:alfp-lat-ord-rank-eq}). Let us
now assume that $j \neq s$, hence $\varrho_i(R) \sqsubset
\varrho_{i+1}(R)$ for some $R \in \R$ and $i=1,2$. Without loss of
generality let us assume that $\varrho_1(R) \sqsubset
\varrho_2(R)$. We have $\varrho_1(R) \sqsubset \varrho_2(R)$ and
$\varrho_2(R) \sqsubseteq \varrho_3(R)$, hence $\varrho_1(R)
\sqsubset \varrho_3(R)$, and (\ref{itm:alfp-lat-ord-rank-s}) holds.

\noindent {\bf Anti-symmetry} $\forall \varrho_1, \varrho_2 \in \Delta: \varrho_1
\leqR \varrho_2 \wedge \varrho_2 \leqR \varrho_1
\Rightarrow \varrho_1 = \varrho_2$.

\noindent Let us assume $\varrho_1 \leqR \varrho_2$ and
$\varrho_2 \leqR \varrho_1$. Let $j$ be minimal such that
$\rank(R)=j$ and $\varrho_1(R) \neq \varrho_2(R)$ for some $R \in
\R$. Then, since $\rank(R)=j$, we have $\varrho_1(R) \sqsubseteq
\varrho_2(R)$ and $\varrho_2(R) \sqsubseteq \varrho_1(R)$.  Hence
$\varrho_1(R) = \varrho_2(R)$ which is a contradiction. Thus it must
be the case that $\varrho_1(R) = \varrho_2(R)$ for all $R \in \R$.
\qed
\end{proof}

\section{Proof of Lemma \ref{lemma:complete-lattice-alfp-lat}}
\label{proof:lemma:complete-lattice-alfp-lat}

\begin{proof}
  First we prove that $\glb M$ is a lower bound of $M$; that is $\glb
  M \leqR {\varrho}$ for all ${\varrho} \in M$. Let $j$ be maximum
  such that ${\varrho} \in M_j$; since $M=M_0$ and $M_j \supseteq
  M_{j+1}$ clearly such $j$ exists. From definition of $M_j$ it
  follows that $(\glb M)(R)={\varrho}(R)$ for all $R$ with
  $\rank(R)<j$; hence (\ref{itm:alfp-lat-ord-rank-less}) holds. If
  $\rank(R)=j$ we have $(\glb M)(R)= \lambda \vec{a} . \bigsqcap \{
  {\varrho}'(R)(\vec{a}) \mid {\varrho}' \in M_j \} \sqsubseteq
  {\varrho}(R)$ showing that (\ref{itm:alfp-lat-ord-rank-eq})
  holds. Finally let us assume that $j \neq s$; we need to show that
  there is some $R$ with $\rank(R)=j$ such that $(\glb
  M)(R)\sqsubset{\varrho}(R)$. Since we know that $j$ is maximum such
  that ${\varrho} \in M_j$, it follows that ${\varrho} \notin
  M_{j+1}$, hence there is a relation $R$ with $\rank(R)=j$ such that
  $(\glb M)(R)\sqsubset{\varrho}(R)$; thus
  (\ref{itm:alfp-lat-ord-rank-s}) holds.

  \noindent Now we need to show that $\glb M$ is the greatest lower
  bound. Let us assume that ${\varrho}' \leqR {\varrho}$ for all
  ${\varrho} \in M$, and let us show that ${\varrho}' \leqR \glb
  M$. If ${\varrho}' = \glb M$ the result holds vacuously, hence let
  us assume ${\varrho}' \neq \glb M$. Then there exists a minimal $j$
  such that $(\glb M)(R)\neq{\varrho}'(R)$ for some $R$ with
  $\rank(R)=j$. Let us first consider $R$ such that $\rank(R)<j$. By
  our choice of $j$ we have $(\glb M)(R) = {\varrho}'(R)$ hence
  (\ref{itm:alfp-lat-ord-rank-less}) holds. Next assume that
  $\rank(R)=j$. Since we assumed that ${\varrho}' \leqR {\varrho}$ for
  all ${\varrho} \in M$ and $M_j \subseteq M$, it follows that
  ${\varrho}'(R) \sqsubseteq {\varrho}(R)$ for all ${\varrho} \in
  M_j$. Thus we have ${\varrho}'(R) \sqsubseteq \lambda \vec{a}
  . \bigsqcap \{ {\varrho}(R)(\vec{a}) \mid {\varrho} \in M_j
  \}$. Since $(\glb M)(R) = \lambda \vec{a} . \bigsqcap \{
  {\varrho}(R)(\vec{a}) \mid {\varrho} \in M_j \}$, we have
  ${\varrho}'(R) \sqsubseteq (\glb M)(R)$ which proves
  (\ref{itm:alfp-lat-ord-rank-eq}). Finally since we assumed that
  ${\varrho}'(R)\neq(\glb M)(R)$ for some $R$ with $\rank(R)=j$,
  it follows that (\ref{itm:alfp-lat-ord-rank-s}) holds. Thus we
  proved that ${\varrho}' \leqR \glb M$.  \qed

\end{proof}

\section{Proof of Proposition \ref{prop:moore-family-alfp-lat}}
\label{proof:prop:moore-family-alfp-lat}

In order to prove Proposition \ref{prop:moore-family-alfp-lat} we
first state and prove two auxiliary lemmas.

\begin{lemma}
\label{lemma:glb-pre-alfp-lat}
If $\varrho = \glb M$, $pre$ occurs in $cl_j$ and
$(\varrho, \varsigma) \models_{\beta} pre$ then also
$({\varrho}', {\varsigma}) \models_{\beta} pre$ for all
${\varrho}' \in M_j$.
\end{lemma}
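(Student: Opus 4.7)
The plan is to proceed by structural induction on the precondition $pre$, exploiting the stratification condition and the characterization of $\glb M$ from Lemma~\ref{lemma:complete-lattice-alfp-lat}. The central facts I will repeatedly use are: (i) for any $\varrho' \in M_j$, we have $\varrho'(R) = (\glb M)(R)$ whenever $\rank(R)<j$ (by definition of $M_j$); (ii) $\glb M \leqR \varrho'$ for every $\varrho' \in M_j \subseteq M$, which together with $\rank(R)=j$ yields $(\glb M)(R) \sqsubseteq \varrho'(R)$ pointwise; and (iii) by the stratification condition (Definition~\ref{def:stratification}), positive queries in $cl_j$ involve relations of rank at most $j$, and negative queries involve relations of rank strictly less than $j$.

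The base cases carry the content. For a positive query $pre = R(\vec u; V)$ with $\rank(R) \le j$, the hypothesis gives $\varrho(R)(\varsigma(\vec u)) \sqsupseteq \varsigma(V)$. If $\rank(R)<j$, then $\varrho'(R) = \varrho(R)$ by (i) and we are done; if $\rank(R) = j$, then by (ii) we have $\varrho'(R)(\varsigma(\vec u)) \sqsupseteq \varrho(R)(\varsigma(\vec u)) \sqsupseteq \varsigma(V)$, using transitivity of $\sqsubseteq$. For a negative query $pre = \neg R(\vec u; V)$, stratification forces $\rank(R)<j$, so by (i) $\varrho'(R) = \varrho(R)$ and the condition $\complement(\varrho(R)(\varsigma(\vec u))) \sqsupseteq \varsigma(V)$ transfers verbatim; notice anti-monotonicity of $\complement$ is not even needed here, precisely because stratification gives equality rather than just inequality on $\varrho(R)$. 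For $pre = Y(u)$, the satisfaction condition $\beta(\varsigma(u)) \sqsubseteq \varsigma(Y)$ mentions neither $\varrho$ nor $\varrho'$, so it is trivially preserved.

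The inductive cases are routine. For $pre_1 \wedge pre_2$ and $pre_1 \vee pre_2$, the induction hypothesis applies to each subprecondition (both occur in the same stratum $cl_j$, so the stratification bounds on their predicate ranks are inherited). For $\exists x: pre'$ and $\exists Y: pre'$, we pick the witness $a\in\U$ (resp.~$l\in\Lnb$) from the hypothesis on $(\varrho,\varsigma)$ and apply the induction hypothesis to $(\varrho,\varsigma[x\mapsto a])$ (resp.~$\varsigma[Y\mapsto l]$), observing that the environment update is orthogonal to the interpretation of predicates.

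The only delicate step is the positive-query case at rank exactly $j$, and this is where it is crucial that the ordering $\leqR$ restricted to rank-$j$ relations coincides with pointwise $\sqsubseteq$ on lattice values; without the lower-bound property of $\glb M$ on $M_j$, we could not lift satisfaction from $\glb M$ to arbitrary $\varrho' \in M_j$. Apart from this, the proof is a bookkeeping exercise threading the stratification and the definition of $M_j$ through the syntax of $pre$.
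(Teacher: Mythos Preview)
Your proof is correct and follows essentially the same structural-induction approach as the paper. The only cosmetic difference is in the positive-query case at rank exactly $j$: the paper invokes the explicit formula $(\glb M)(R)(\vec a)=\bigsqcap\{\varrho'(R)(\vec a)\mid\varrho'\in M_j\}$ from Lemma~\ref{lemma:complete-lattice-alfp-lat} to read off the pointwise inequality directly, whereas you obtain it from the lower-bound property $\glb M\leqR\varrho'$ combined with $\varrho'\in M_j$; both routes are equivalent.
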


\begin{proof}
  We proceed by induction on $j$ and in each case perform a structural
  induction on the form of the precondition $pre$ occurring in
  $cl_j$.\newline \textbf{Case: }$pre=R(\vec{u};V)$ \newline Let us
  take $\varrho = \glb M$ and assume that
\[
({\varrho}, {\varsigma}) \models_{\beta} R(\vec{u};V)
\]
From Table \ref{ALFPStarsemantics} we have:
\[
\varrho(R)(\varsigma(\vec u)) \sqsupseteq \varsigma(V)
\]
Depending on the rank of $R$ we have two cases. If $\rank(R)=j$ then
$\varrho(R) = \lambda \vec{a} . \bigsqcap \{ {\varrho}'(R)(\vec{a}) \mid {\varrho}' \in M_j \}$
and hence we have
\[
\bigsqcap \{ {\varrho}'(R)(\varsigma(\vec u)) \mid {\varrho}' \in M_j
\} \sqsupseteq \varsigma(V)
\]
It follows that for all ${\varrho}' \in M_j$
\[ 
{\varrho}'(R)(\varsigma(\vec u)) \sqsupseteq \varsigma(V)
\]
Now if $\rank(R)<j$ then ${\varrho}(R) = {\varrho}'(R)$ for all
${\varrho}' \in M_j$ hence we have that for all
${\varrho}' \in M_j$
\[
{\varrho}'(R)(\varsigma(\vec u)) \sqsupseteq \varsigma(V)
\]
which according to Table \ref{ALFPStarsemantics} is equivalent to
\[
\forall {\varrho}' \in M_j:
({\varrho}', {\varsigma}) \models_{\beta} R(\vec{u};V)
\]
which was required and finishes the case.\newline
\textbf{Case: }$pre=Y(u)$ \newline Let us
  take $\varrho = \glb M$ and assume that
\[
({\varrho}, {\varsigma}) \models_{\beta} Y(u)
\]
According to the semantics of LLFP in Table \ref{ALFPStarsemantics}
we have
\[
\beta(\varsigma(u)) \sqsubseteq \varsigma(Y)
\]
It follows that 
\[
\forall {\varrho}' \in M_j: \beta(\varsigma(u)) \sqsubseteq \varsigma(Y)
\]
which according to the semantics of LLFP in Table
\ref{ALFPStarsemantics} is equivalent to
\[
\forall {\varrho}' \in M_j: ({\varrho}', {\varsigma})
\models_{\beta} Y(u)
\]
which was required and finishes the case.\newline
\textbf{Case: }$pre=\neg R(\vec{u};V)$ \newline Let us take
$\varrho = \glb M$ and assume that
\[
({\varrho}, {\varsigma}) \models_{\beta} \neg R(\vec{u};V)
\]
From Table \ref{ALFPStarsemantics} we have:
\[
\complement(\varrho(R)(\varsigma(\vec{u}))) \sqsupseteq \varsigma(V)
\]
Since $\rank(R)<j$ then we know that ${\varrho}(R) = {\varrho}'(R)$ for all
${\varrho}' \in M_j$ hence we have that
\[
\forall {\varrho}' \in M_j: \complement(\varrho(R)(\varsigma(\vec{u}))) \sqsupseteq \varsigma(V)
\]
Which according to Table \ref{ALFPStarsemantics} is equivalent to
\[
\forall {\varrho}' \in M_j:
({\varrho}', {\varsigma}) \models_{\beta} \neg R(\vec{u};V)
\]
which was required and finishes the case.\newline
\textbf{Case: }$pre=pre_1 \wedge pre_2$ \newline Let us take
$\varrho = \glb M$ and assume that
\[
({\varrho}, {\varsigma}) \models_{\beta} pre_1
\wedge pre_2
\]
According to Table \ref{ALFPStarsemantics} we have
\[
({\varrho}, {\varsigma}) \models_{\beta} pre_1
\]
and 
\[
({\varrho}, {\varsigma}) \models_{\beta} pre_2
\]
From the induction hypothesis we get that for all ${\varrho}' \in
M_j$
\[
({\varrho}', {\varsigma}) \models_{\beta} pre_1
\]
and 
\[
({\varrho}', {\varsigma}) \models_{\beta} pre_2
\]
It follows that for all ${\varrho}' \in
M_j$
\[
({\varrho}', {\varsigma}) \models_{\beta} pre_1
\wedge pre_2
\]
which was required and finishes the case.\newline
\textbf{Case: }$pre=pre_1 \vee pre_2$ \newline Let us take
$\varrho = \glb M$ and assume that
\[
({\varrho}, {\varsigma}) \models_{\beta} pre_1
\vee pre_2
\]
According to Table \ref{ALFPStarsemantics} we have
\[
({\varrho}, {\varsigma}) \models_{\beta} pre_1
\]
or 
\[
({\varrho}, {\varsigma}) \models_{\beta} pre_2
\]
From the induction hypothesis we get that for all ${\varrho}' \in
M_j$
\[
({\varrho}', {\varsigma}) \models_{\beta} pre_1
\]
or 
\[
({\varrho}', {\varsigma}) \models_{\beta} pre_2
\]
It follows that for all ${\varrho}' \in
M_j$
\[
({\varrho}', {\varsigma}) \models_{\beta} pre_1
\vee pre_2
\]
which was required and finishes the case.\newline \textbf{Case: }$pre=
\exists x : pre'$\newline Let us take $\varrho = \glb M$ and
assume that
\[
({\varrho}, {\varsigma}) \models_{\beta} \exists x
: pre'
\]
According to Table \ref{ALFPStarsemantics} we have
\[
\exists a \in \U : ({\varrho}, {\varsigma}[x \mapsto a]) \models_{\beta} pre'
\]
From the induction hypothesis we get that for all ${\varrho}' \in M_j$
\[
\exists a \in \U : ({\varrho}', {\varsigma}[x \mapsto a]) \models_{\beta} pre'
\]
It follows from Table \ref{ALFPStarsemantics} that for all ${\varrho}' \in M_j$
\[
({\varrho}', {\varsigma}) \models_{\beta} \exists x
: pre'
\]
which was required and finishes the case.\newline \textbf{Case: }$pre=
\exists Y : pre'$\newline Let us take $\varrho = \glb M$ and
assume that
\[
({\varrho}, {\varsigma}) \models_{\beta} \exists Y
: pre'
\]
According to Table \ref{ALFPStarsemantics} we have
\[
\exists l \in \Lnb : ({\varrho}, {\varsigma}[Y \mapsto l]) \models_{\beta} pre'
\]
From the induction hypothesis we get that for all ${\varrho}' \in M_j$
\[
\exists l \in \Lnb : ({\varrho}', {\varsigma}[Y \mapsto l]) \models_{\beta} pre'
\]
It follows from Table \ref{ALFPStarsemantics} that for all ${\varrho}' \in M_j$
\[
({\varrho}', {\varsigma}) \models_{\beta} \exists Y
: pre'
\]
which was required and finishes the case.\qed
\end{proof}


\begin{lemma}
\label{lemma:glb-cl-alfp-lat}
If $\varrho=\glb M$ and $({\varrho}', \zeta, \varsigma)
\models_{\beta} cl_j$ for all ${\varrho}' \in M$ then $(\varrho,
\zeta, \varsigma) \models_{\beta} cl_j$.
\end{lemma}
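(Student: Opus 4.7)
The plan is to establish the lemma by structural induction on $cl_j$. Before starting, I would strengthen the statement to an inductive invariant stated relative to $M_j$: for any subclause $cl$ of $cl_j$, if $(\varrho', \zeta, \varsigma) \models_\beta cl$ holds for all $\varrho' \in M_j$, then $(\varrho, \zeta, \varsigma) \models_\beta cl$. The original lemma follows at once, since $M_j \subseteq M$ makes the assumption on $M$ imply the assumption on $M_j$. This strengthening is needed because assertions inside $cl_j$ involve relations of rank exactly $j$, and the value of $\varrho = \glb M$ on those relations is defined as a meet over $M_j$ (not over all of $M$).

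The base case $cl = R(\vec u; V')$ is the heart of the argument. By stratification every assertion in $cl_j$ has rank $j$, so $\varrho(R)(\vec{a}) = \bigsqcap\{\varrho'(R)(\vec{a}) \mid \varrho' \in M_j\}$ by Lemma \ref{lemma:complete-lattice-alfp-lat}. Instantiating the hypothesis at $\vec{a} = \sem{\vec u}(\zeta,\varsigma)$, each $\varrho' \in M_j$ gives $\varrho'(R)(\sem{\vec u}(\zeta,\varsigma)) \sqsupseteq \sem{V'}(\zeta,\varsigma)$, and taking the meet over $M_j$ yields $\varrho(R)(\sem{\vec u}(\zeta,\varsigma)) \sqsupseteq \sem{V'}(\zeta,\varsigma)$ as required. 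The case $cl = \mathbf{1}$ is immediate, and the conjunction case reduces componentwise to two applications of the inductive hypothesis.

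The implication case $cl = pre \Rightarrow cl'$ is where Lemma \ref{lemma:glb-pre-alfp-lat} is used essentially. Assume $(\varrho, \varsigma) \models_\beta pre$; since $pre$ occurs in $cl_j$, Lemma \ref{lemma:glb-pre-alfp-lat} transfers this to $(\varrho', \varsigma) \models_\beta pre$ for every $\varrho' \in M_j$. Combining with the inductive assumption $(\varrho', \zeta, \varsigma) \models_\beta pre \Rightarrow cl'$, we conclude $(\varrho', \zeta, \varsigma) \models_\beta cl'$ for each $\varrho' \in M_j$, and the inductive hypothesis applied to $cl'$ produces $(\varrho, \zeta, \varsigma) \models_\beta cl'$. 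The universal cases $\forall x: cl'$ and $\forall Y: cl'$ are handled by fixing an arbitrary $a \in \mathcal{U}$ (resp.\ $l \in \Lnb$), reading the hypothesis at the extended environment $\varsigma[x \mapsto a]$ (resp.\ $\varsigma[Y \mapsto l]$), and applying the inductive hypothesis to $cl'$ in that environment.

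The main obstacle is precisely the mismatch between the lemma's ambient set $M$ and the set $M_j$ over which $\glb M$ is defined at rank $j$; without reformulating the induction in terms of $M_j$, the implication case cannot close. Once that reformulation is made, Lemma \ref{lemma:glb-pre-alfp-lat} plays the dual role for preconditions (pushing satisfaction at $\varrho$ down to the members of $M_j$) while the present lemma pushes satisfaction at the members of $M_j$ back up to $\varrho$, and the two induction principles fit together cleanly.
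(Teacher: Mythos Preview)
Your proposal is correct and follows the same structural induction on the form of $cl_j$ that the paper uses, with the same case analysis and the same appeal to Lemma~\ref{lemma:glb-pre-alfp-lat} in the implication case. The one point where you are more careful than the paper is in stating the inductive invariant over $M_j$ rather than $M$: the paper's proof, in the implication case, obtains $(\varrho',\zeta,\varsigma)\models_\beta cl$ only for $\varrho'\in M_j$ and then invokes ``the induction hypothesis'' without comment, so your explicit strengthening is exactly what is needed to make that step go through; otherwise the arguments coincide.
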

\begin{proof}
  We proceed by induction on $j$ and in each case perform a structural
  induction on the form of the clause occurring in $cl_j$.\newline
\textbf{Case: }$cl_j=R(\vec{u};V)$\newline Assume that for all
  ${\varrho}' \in M$
  \[
  ({\varrho}', \zeta, {\varsigma}) \models_{\beta} R(\vec{u};V)
  \]
  From the semantics of LLFP we have that for all ${\varrho}'
  \in M$
 \[ 
 {\varrho}'(R)(\sem{\vec u}(\zeta, \varsigma)) \sqsupseteq
 \sem{V}(\zeta, \varsigma)
  \]
  It follows that:
  \[
  \bigsqcap \{ {\varrho}'(R)(\sem{\vec u}(\zeta, \varsigma)) \mid
  {\varrho}' \in M \} \sqsupseteq \sem{V}(\zeta, \varsigma)
  \]
  Since $M_j \subseteq M$, we have:
   \[
  \bigsqcap \{ {\varrho}'(R)(\sem{\vec u}(\zeta, \varsigma)) \mid
  {\varrho}' \in M_j \} \sqsupseteq \sem{V}(\zeta, \varsigma)
  \]
  We know that $\rank(R)=j$; hence $\varrho(R) = \lambda \vec{a}
  . \bigsqcap \{ {\varrho}'(R)(\vec{a}) \mid {\varrho}' \in M_j \}$;
  thus
\[
{\varrho}(R)(\sem{\vec u}(\zeta, \varsigma)) = \bigsqcap \{
{\varrho}'(R)(\sem{\vec u}(\zeta, \varsigma)) \mid {\varrho}' \in M_j
\} \sqsupseteq \sem{V}(\zeta, \varsigma)
\]
Which according to Table \ref{ALFPStarsemantics} is equivalent to
  \[
  ({\varrho}, \zeta, \varsigma) \models_{\beta} R(\vec{u};V)
  \]
\textbf{Case: }$cl_j= cl_1 \wedge cl_2$ \newline Assume that for all
  ${\varrho}' \in M$:
\[
({\varrho}', \zeta, {\varsigma}) \models_{\beta} cl_1 \wedge cl_2
\]
From Table \ref{ALFPStarsemantics} it is equivalent to
\[
({\varrho}', \zeta, {\varsigma}) \models_{\beta} cl_1 \text{ and }
({\varrho}', \zeta, {\varsigma}) \models_{\beta} cl_2
\]
The induction hypothesis gives that
\[
({\varrho}, \zeta, {\varsigma}) \models_{\beta} cl_1 \text{ and }
({\varrho}, \zeta, {\varsigma}) \models_{\beta} cl_2
\]
Which according to Table \ref{ALFPStarsemantics} is equivalent to
\[
({\varrho}, \zeta, {\varsigma}) \models_{\beta} cl_1 \wedge cl_2
\]
and finishes the case.\newline
  \textbf{Case: }$cl_j=pre \Rightarrow cl$ \newline Assume that for all
  ${\varrho}' \in M$:
  \begin{equation}
    ({\varrho}', \zeta, {\varsigma}) \models_{\beta} pre \Rightarrow cl \label{eq:assumption-imply}
  \end{equation} 
  We have two cases. In the first one $(\varrho, \varsigma) \models_{\beta} pre$ is $false$, hence $(\varrho,
\varsigma, \zeta) \models_{\beta} pre \Rightarrow cl$ holds
trivially. In the second case let us assume:
  \begin{equation}
    (\varrho, \varsigma) \models_{\beta} pre \label{eq:assumption-pre-imply}
  \end{equation} 
  Lemma \ref{lemma:glb-pre-alfp-lat} gives that for all ${\varrho}'
  \in M_j$
  \[
  ({\varrho}', {\varsigma}) \models_{\beta} pre
  \]
  From \eqref{eq:assumption-imply} we have that for all
  ${\varrho}' \in M_j$
  \[
  ({\varrho}', \zeta, {\varsigma}) \models_{\beta} cl
  \]
  and the induction hypothesis gives:
  \[
  ({\varrho}, \zeta, \varsigma) \models_{\beta} cl
  \]
  Hence from \eqref{eq:assumption-pre-imply} we get:
  \[
  ({\varrho}, \zeta, \varsigma) \models_{\beta} pre \Rightarrow cl
  \]
  which was required and finishes the case. \newline \textbf{Case: }
  $cl_j=\forall x: cl$\newline Assume that for all
  ${\varrho}' \in M$
  \[
  ({\varrho}', \zeta, {\varsigma}) \models_{\beta} \forall x: cl
  \]
  From Table \ref{ALFPStarsemantics} we have that for all
  ${\varrho}' \in M$ and for all $a \in \U$
  \[
    ({\varrho}', \zeta, {\varsigma}[x \mapsto a]) \models_{\beta} cl
  \]
  Thus from the induction hypothesis we get that for all $a \in \U$
  \[
  ({\varrho}, \zeta, \varsigma[x \mapsto a]) \models_{\beta} cl
  \]
  According to Table \ref{ALFPStarsemantics} it is equivalent to
  \[
  ({\varrho},\zeta, \varsigma) \models_{\beta} \forall x: cl
  \]
  which was required and finishes the case.\newline \textbf{Case: }
  $cl=\forall Y: cl$\newline Assume that for all ${\varrho}' \in
  M$
  \[
({\varrho}', \zeta, {\varsigma}) \models_{\beta} \forall Y: cl
  \]
  From Table \ref{ALFPStarsemantics} we have that ${\varrho}' \in
  M$
  \[
  \forall l \in \Lnb: ({\varrho}', \zeta,{\varsigma}[ Y \mapsto l ]) \models_{\beta} cl
  \]
  Thus from the induction hypothesis we get that
  \[
\forall l \in \Lnb:
  ({\varrho}, \zeta,\varsigma[Y \mapsto l]) \models_{\beta} cl
  \]
  According to Table \ref{ALFPStarsemantics} it is equivalent to
  \[
  ({\varrho}, \zeta,\varsigma) \models_{\beta} \forall Y: cl
  \]
  which was required and finishes the case.
\end{proof}

\noindent{\bf Proposition \ref{prop:moore-family-alfp-lat}.}
Assume $cls$ is a stratified LLFP clause sequence, $\varsigma_0$ and
$\zeta_0$ are interpretations of free variables and function symbols
in $cls$, respectively. Furthermore, $\varrho_0$ is an interpretation
of all relations of rank 0. Then $\{ \varrho \mid (\varrho, \zeta_0,
\varsigma_0) \models_{\beta} cls \wedge \forall R: \rank(R) =
0 \Rightarrow \varrho_0(R) \sqsubseteq \varrho(R) \}$ is a Moore
family.

\begin{proof}
The result follows from Lemma \ref{lemma:glb-cl-alfp-lat}.
\qed
\end{proof}

\end{document}